\newtheorem{theorem}{Theorem}
\newtheorem{proposition}[theorem]{Proposition}
\newtheorem{definition}[theorem]{Definition}
\newtheorem{remark}[theorem]{Remark}
\newtheorem{assumption}[theorem]{Assumption}
\newtheorem{example}[theorem]{Example}
\newcommand{\bmatAB}{\begin{bmatrix}
	A & B
\end{bmatrix}}
\newcommand{\bmatABP}{\begin{bmatrix}
	A & B & P
\end{bmatrix}}
\newcommand{\bmatP}{\begin{bmatrix}
	\cP & 1
\end{bmatrix}}
\newcommand{\SigABP}{\Sigma_{A,B,P}}
\title{\LARGE \bf
Data-Driven Control of Nonlinear Systems: Beyond Polynomial Dynamics
}
\author{Robin Str\"asser, Julian Berberich, Frank Allg\"ower% <-this % stops a space
\thanks{This work was funded by Deutsche Forschungsgemeinschaft (DFG, German Research Foundation) under Germany's Excellence Strategy - EXC 2075 - 390740016.
We acknowledge the support by the Stuttgart Center for Simulation Science (SimTech).
The authors thank the International Max Planck Research School for Intelligent Systems (IMPRS-IS) for supporting Julian Berberich.}% <-this % stops a space
\thanks{Robin Str\"asser, Julian Berberich, and Frank Allg\"ower are with the Institute for Systems Theory and Automatic Control, University of Stuttgart, 70550 Stuttgart, Germany
		(email:\{robin.straesser, julian.berberich, frank.allgower\}@ist.uni-stuttgart.de).}%
}
\begin{document} 

\pubid{\begin{minipage}{\textwidth}\ \\[60pt] \copyright 2021 IEEE. This version has been accepted for publication in Proc. IEEE Conference on Decision and Control (CDC), 2021. Personal use of this material is permitted. Permission from IEEE must be obtained for all other uses, in any current or future media, including reprinting/republishing this material for advertising or promotional purposes, creating new collective works, for resale or redistribution to servers or lists, or reuse of any copyrighted component of this work in other works.\end{minipage}}
 
\maketitle
% \thispagestyle{empty}
% \pagestyle{empty}

%%%%%%%%%%%%%%%%%%%%%%%%%%%%%%%%%%%%%%%%%%%%%%%%%%%%%%%%%%%%%%%%%%%%%%%%%%%%%%%%

% abstract and keywords
\begin{abstract}%
    In this paper, we present a data-driven controller design method for continuous-time nonlinear systems, using no model knowledge but only measured data affected by noise.
    While most existing approaches focus on systems with polynomial dynamics, our approach allows to design controllers for unknown systems with rational or general non-polynomial dynamics.
    We first derive a data-driven parametrization of unknown nonlinear systems with rational dynamics.
    By applying robust control techniques to this parametrization, we obtain sum-of-squares based criteria for designing controllers with closed-loop robust stability and performance guarantees for all systems which are consistent with the measured data and the assumed noise bound.
    We then apply this approach to control systems whose dynamics are linear in general non-polynomial basis functions by transforming them into polynomial systems.
    Finally, we apply the developed approaches to numerical examples.
\end{abstract}

\begin{keywords}%
    Learning,
    robust control, 
    nonlinear systems
\end{keywords} 

\section{Introduction}
% \section{Introduction}
Many systems in natural and engineering tasks possess nonlinear dynamic components which cannot be neglected for controller design with good practical performance as well as rigorous guarantees. Therefore, a standard procedure is to first obtain an accurate nonlinear system model using either a first-principles approach or system identification and then apply established nonlinear controller design techniques~\cite{khalil:2002}. However, the identification of nonlinear systems can be challenging and time-consuming. Thus, learning and direct data-driven methods for nonlinear systems have received increasing attention in the last decades~\cite{hou:wang:2013}. 

For linear time-invariant (LTI) systems, the work in~\cite{willems:rapisarda:markovsky:demoor:2005} provides a promising framework for data-driven control. 
Based on this result, \cite{depersis:tesi:2019} design state- and output-feedback controllers using only measured data, \cite{waarde:camlibel:mesbahi:2020} provide an improvement for controller design based on noisy data, and~\cite{berberich:scherer:allgower:2020} develop a framework for combining data and possibly available prior knowledge for controller design.
Since all of these works focus on LTI systems, an immediate question is whether Willems et al.'s fundamental lemma~\cite{willems:rapisarda:markovsky:demoor:2005} and related works can also be extended to nonlinear systems. 
Extensions of the fundamental lemma have been developed for Hammerstein and Wiener systems~\cite{berberich:allgower:2020}, for second-order Volterra systems~\cite{rueda-escobedo:schiffer:2020}, and for flat nonlinear systems~\cite{alsalti:berberich:lopez:allgower:muller:2021}. 
The work in~\cite{lian:wang:jones:2021} uses the Koopman operator~\cite{koopman:neumann:1932} to first lift a nonlinear system to an infinite-dimensional linear system and then use linear design methods based on a finite-dimensional approximation, however, without any closed-loop guarantees.

A popular approach for model-based nonlinear control relies on sum-of-squares (SOS) optimization, which reformulates nonnegativity conditions as semi-definite programs (SDP)~\cite{parrilo:2000}.
In~\cite{martin:allgower:2020}, SOS methods are applied to analyze discrete-time polynomial systems w.r.t. dissipativity properties using noisy data.
Similarly, the recent papers~\cite{guo:depersis:tesi:2020a,guo:depersis:tesi:2020b} extend the results in~\cite{depersis:tesi:2019,waarde:camlibel:mesbahi:2020} to design stabilizing controllers for continuous-time polynomial systems based on noisy data.
In~\cite{dai:sznaier:2021}, a similar control problem is solved using different technical arguments based on Rantzer's Dual Lyapunov approach.
Another important class of nonlinear systems is that of systems with rational dynamics, comprising, e.g., enzyme kinetics~\cite{holmberg:1982} or biochemical reactors~\cite{strogatz:2018}. 
In general, identification for such systems can be challenging~\cite{evans:chapman:chappell:godfrey:2002}.

In the present paper, we consider data-driven controller design for nonlinear continuous-time systems with possibly non-polynomial system dynamics.
To this end, we first extend the results of~\cite{depersis:tesi:2019,waarde:camlibel:mesbahi:2020,berberich:scherer:allgower:2020,guo:depersis:tesi:2020b} to derive a purely data-driven system parametrization of unknown rational systems.
We then exploit this parametrization for robust controller design, adapting existing robust control techniques based on linear matrix inequalities (LMI) in~\cite{scherer:weiland:2000}.
This leads to an SOS-based controller design procedure with robust stability and performance guarantees for all rational systems consistent with the data.
Furthermore, we show that within the proposed framework we are also able to design stabilizing controllers for nonlinear systems with non-polynomial basis functions by lifting them to an extended state-space with polynomial dynamics.

\pubidadjcol
\emph{Outline:}
The paper is organized as follows. In Section \ref{sec:preliminaries}, we state the problem setting and introduce some required notation for SOS optimization. 
The data-driven representation of the class of rational systems used throughout most of this paper is presented in Section \ref{sec:dd_controller_design}.
Based on this, we develop data-driven controller design procedures for closed-loop stability and performance using S-procedure relaxations and we apply the developed technique to numerical examples.
Section~\ref{sec:general_nonlinearities} extends the results to nonlinear systems with non-polynomial basis functions.
Finally, we conclude the paper in Section~\ref{sec:conclusion}.

\emph{Notation:}
We write $I_p$ for the $p\times p$ identity matrix where we omit the index if the dimension is clear from the context.
For a matrix $A$, we denote by $A^{\perp}$ a matrix spanning the left-kernel of $A$, i.e., $A^\perp A=0$.
If $A$ is symmetric, then we write $A\succeq0$ if $A$ is positive semidefinite.
Matrix blocks which can be inferred from symmetry are denoted by $\star$ and we abbreviate $V^\top UV$ by writing $[\star]^\top UV$.
Further, we write $\lVert x\rVert_2$ for the Euclidean norm of a vector $x$.
Finally, $\otimes$ denotes the Kronecker product.

\section{Preliminaries}\label{sec:preliminaries}
% \section{Preliminaries}\label{sec:preliminaries}
In this section, we introduce the problem setting (Section~\ref{sec:problem_setting}) and we provide required background on SOS optimization (Section~\ref{sec:SOS_optimization}).

\subsection{Problem setting}\label{sec:problem_setting}
% \subsection{Problem setting}\label{sec:problem_setting}
Throughout most of the paper, we consider continuous-time systems with rational system dynamics of the form
\begin{equation}\label{eq:rational_system_dynamics}
	\begin{aligned}
		\xdot 
		&= f_r(x) + g_r(x) u \\
		&= \begin{bmatrix}
			\frac{a_1(x)}{d_1(x)} \\ 
			\vdots \\ 
			\frac{a_n(x)}{d_n(x)}
		\end{bmatrix}
		+ \begin{bmatrix}
			\frac{b_{11}(x)}{e_{11}(x)} & \cdots & \frac{b_{1m}(x)}{e_{1m}(x)} \\ 
			\vdots & \ddots & \vdots \\ 
			\frac{b_{n1}(x)}{e_{n1}(x)} & \cdots & \frac{b_{nm}(x)}{e_{nm}(x)}
		\end{bmatrix} u
	\end{aligned}
\end{equation}
with the state vector $x(t)\in\bbR^n$, its derivative $\xdot(t)\in\bbR^n$, and the control input $u(t)\in\bbR^m$, all at time $t\geq 0$, where we omit the time index to keep the notation simple. 
Further, $d_i(x)$, $b_{ij}(x)$, $e_{ij}(x)$, $i=1,\dots,n$ and $j=1,\dots,m$, are polynomials with degree greater than or equal to zero, and $a_i(x)$, $i=1,\dots,n$, are polynomials with degree greater than zero. 
The latter is necessary because, for simplicity, we assume the origin to be a steady-state.
Any $\dot{x}$, $x$, $u$ satisfying~\eqref{eq:rational_system_dynamics} also fulfill the polynomial equation
\begin{multline}\label{eq:rational_system_dynamics2}
	p(x) \xdot
	= \begin{bmatrix}
		a_1(x) p_1(x) \\ \vdots \\ a_n(x) p_n(x)
	\end{bmatrix}
	\\
	+ \begin{bmatrix}
		b_{11}(x)p_{11}(x) & \cdots & b_{1m}(x)p_{1m}(x) \\
		\vdots & \ddots & \vdots \\
		b_{n1}(x)p_{n1}(x) & \cdots & b_{nm}(x)p_{nm}(x)
	\end{bmatrix} u
\end{multline}
with
\begin{gather*}
	p(x) = \prod_{i=1}^nd_i(x)\prod_{j=1}^m e_{ij}(x),\\
	p_i(x) = \frac{p(x)}{d_i(x)}, \qquad
	p_{ij}(x) = \frac{p(x)}{e_{ij}(x)}\,.
\end{gather*}
We assume $p(x)\neq 0$ for all $x\in\bbR^n$ which is equivalent to assuming $d_i(x),e_{ij}(x)\neq 0$ for all $i,j$, and thus, implies that the vector field in \eqref{eq:rational_system_dynamics} is globally defined. Then, we define 
\begin{gather*}
	\bmatP \begin{bmatrix}
		Z_p(x) \\ 1
	\end{bmatrix} = p(x),
	\quad
	\cA Z(x) = \begin{bmatrix}
		a_1(x) p_1(x) \\ 
		\vdots \\ 
		a_n(x) p_n(x)
	\end{bmatrix},
	\\
	\cB H(x) = \begin{bmatrix}
		b_{11}(x)p_{11}(x) & \cdots & b_{1m}(x)p_{1m}(x) \\
		\vdots & \ddots & \vdots \\
		b_{n1}(x)p_{n1}(x) & \cdots & b_{nm}(x)p_{nm}(x)
	\end{bmatrix},	
\end{gather*}
where $\cA\in\bbR^{n\times N_z}$, $\cB\in\bbR^{n\times N_u}$ and $\cP\in\bbR^{1\times N_p}$ are unknown parameters representing the coefficients of the polynomials in~\eqref{eq:rational_system_dynamics2}, $Z(x)$ is an $N_z\times 1$ vector of monomials in $x$, $Z_p(x)$ is an $N_p \times 1$ vector of monomials in $x$, and $H(x)$ is an $N_u \times m$ matrix of monomials in $x$.
For instance, in case that the system dynamics~\eqref{eq:rational_system_dynamics} are of a scalar and second-order polynomial form, then we have $Z(x)=H(x)=\begin{bmatrix}x& x^2\end{bmatrix}^\top$ and $Z_p(x)=\emptyset$. 
Throughout this paper, we assume that $Z(x)$, $Z_p(x)$, and $H(x)$ are known, although our arguments remain valid if they are over-approximated, i.e., if additional monomials are added which are not present in the actual unknown system dynamics.
This translates into assuming that the basis functions of the numerators and denominators in~\eqref{eq:rational_system_dynamics} are known or over-approximated.
Moreover, $Z(x)$ and $Z_p(x)$ only contain monomials with a minimum degree of one such that $Z(x)$ is zero if and only if $x=0$ and we can decompose $Z(x)$ as $Z(x)=Y(x)x$ for a matrix $Y(x)$ of dimension $N_z \times n$, and similarly for $Z_p(x)$.

Using these definitions, we can rewrite~\eqref{eq:rational_system_dynamics2} as \begin{equation}\label{eq:rational_system_dynamics3}
	\begin{bmatrix}\cP&1\end{bmatrix}\begin{bmatrix}Z_p(x)\\1\end{bmatrix}
	\xdot = \cA Z(x) + \cB H(x) u\,,
\end{equation}
where $\cP$, $\cA$, and $\cB$ are unknown parameters and $Z_p(x)$, $Z(x)$, and $H(x)$ are known polynomial basis matrices.
This representation is advantageous over~\eqref{eq:rational_system_dynamics} because it is linearly parametrized in the unknown variables.
In this paper, we present an approach for designing polynomial state-feedback controllers $u(x)=K(x)Z(x)$ for the considered rational System \eqref{eq:rational_system_dynamics} with robust stability guarantees, using no model knowledge but only one open-loop data trajectory.
However, instead of assuming that the data are generated exactly by the rational system dynamics~\eqref{eq:rational_system_dynamics}, i.e., they satisfy~\eqref{eq:rational_system_dynamics3}, we allow for a perturbation of~\eqref{eq:rational_system_dynamics3} taking the form
\begin{equation}\label{eq:rational_system_dynamics4}
	\begin{bmatrix}\cP&1\end{bmatrix}\begin{bmatrix}Z_p(x)\\1\end{bmatrix}
	\xdot = \cA Z(x) + \cB H(x) u+B_ww\,,
\end{equation}
where $w(t)\in\mathbb{R}^{m_w}$, $t\geq0$, describes an unknown disturbance sequence perturbing the polynomial equation~\eqref{eq:rational_system_dynamics3}.
We assume that the matrix $B_w$ is known and has full column rank.
If $B_w$ does not satisfy the rank assumption, the disturbance can be transformed into $\tilde{w} = B_w w$ with $\tilde{B}_w = I$ and a quadratic noise bound on the sequence of $\tilde{w}$.
In this paper, we use a finite input-state-trajectory $\{x(t_k),\xdot(t_k),u(t_k)\}_{k=1}^{N}$ with sampling times $\{t_k\}_{k=1}^{N}$ satisfying~\eqref{eq:rational_system_dynamics4} for some unknown disturbance $\{\what(t_k)\}_{k=1}^{N}$ which satisfies a known quadratic bound defined via the matrix $
\hat{W} = \begin{bmatrix}
		\what(t_1) & \what(t_2) & \cdots & \what(t_{N})
\end{bmatrix}
$. 

\begin{assumption}\label{ass:noise_bound}
	The noise generating the data satisfies $\hat{W}\in\cW$, where
	\begin{equation}\label{eq:ass_noise_bound}
		\cW \coloneqq \left\{W\in\bbR^{m_w\times N} \mathrel{\Big|} \begin{bmatrix}
			W^\top\\I
		\end{bmatrix}^\top
		\begin{bmatrix}
			Q_w & S_w \\ S_w^\top & R_w
		\end{bmatrix}
		\begin{bmatrix}
			W^\top\\I
		\end{bmatrix}\succeq 0 \right\}
	\end{equation}
	for known matrices $Q_w\in\bbR^{N\times N}$, $S_w\in\bbR^{N\times m_w}$, and $R_w\in\bbR^{m_w\times m_w}$, where $Q_w \prec 0$.
\end{assumption}
This is a common assumption in the literature, e.g.,~\cite{depersis:tesi:2019,waarde:camlibel:mesbahi:2020,berberich:scherer:allgower:2020} use the same noise bound or a special case thereof for data-driven control of linear systems. With Assumption~\ref{ass:noise_bound},
% With this assumption, which is similar to assumptions in \cite{depersis:tesi:2019,waarde:camlibel:mesbahi:2020,berberich:scherer:allgower:2020}, 
we require that the unknown noise sequence affecting the measurements lies within a known set $\cW$ and thus, we obtain a quadratic bound on the matrix $\hat{W}$. This assumption includes various relevant scenarios such as, e.g., bounds on the maximal singular value $\sigma_\mathrm{max}(\hat{W})\leq \wbar_\sigma$, or norm bounds, $\|\what(t_k)\|_2\leq \wbar$ for $k=1,\dots,N$. The latter leads to the choice $Q_w = -I$, $S_w = 0$, and $R_w=\wbar^2NI$.

We note that, in general, requiring measurements of the state derivative $\{\xdot(t_k)\}_{k=1}^N$ as above can be restrictive.
However, the values do not need to be known exactly since inaccuracies, possibly resulting from a finite-difference estimation step, can be translated into a disturbance as in~\eqref{eq:rational_system_dynamics4}.
More precisely, if $\xdot$ is affected by bounded measurement noise, then also the corresponding disturbance in~\eqref{eq:rational_system_dynamics4} is bounded with the resulting bound depending on $\{p(x(t_k))\}_{k=1}^N$, i.e., a guaranteed bound can be computed if an upper bound on $p(x(t_k))$ is available.

Further, in the considered problem setting, the noise affecting the measured data does not enter the rational system dynamics~\eqref{eq:rational_system_dynamics} directly, but rather the polynomial equation~\eqref{eq:rational_system_dynamics4}.
The proposed approach can handle disturbances entering the rational system dynamics \eqref{eq:rational_system_dynamics} directly if a bound as in Assumption~\ref{ass:noise_bound} is available for the \emph{transformed} disturbance $p(x)w$.
Such a bound always exists if the measured data are finite, but, as above, it can only be computed in practice if an upper bound on $p(x)$ is available.
Extending the presented results to different noise scenarios is challenging and an interesting issue for future research.

Finally, we note that assuming availability of input-state measurements as done above and, e.g., in~\cite{guo:depersis:tesi:2020a,guo:depersis:tesi:2020b} can be restrictive.
We expect an extension of our results to output-feedback design based on noisy input-output data to be straightforward by using an extended state vector containing the first $n$ derivatives of the input and output, cf. also~\cite{depersis:tesi:2019,berberich:scherer:allgower:2020}.

\subsection{SOS optimization}\label{sec:SOS_optimization}
% \subsection{SOS optimization}
For a vectorial index $\alpha\in\bbN_0^n$, we write $|\alpha|=\alpha_1 + \cdots + \alpha_n$. Then, we define for a vector $x\in\bbR^n$ the monomial $
x^\alpha = x_1^{\alpha_1}\cdots x_n^{\alpha_n}
$ and let $\bbR[x]$ denote the set of all polynomials $s(x)$ in the variable $x$ with real coefficients, i.e., $
	s(x) = \sum_{\alpha\in\bbN_0^n,|\alpha|\leq d}s_\alpha x^\alpha
$ with $s_\alpha\in\bbR$ for $|\alpha|\leq d\in\bbN_0$. The degree of the polynomial $s(x)$ is defined as the largest $d$ such that $s_\alpha\neq 0$ for some $\alpha\in\bbN_0^n$ with $|\alpha|=d$. Moreover, we denote $\bbR[x]^{p\times q}$ as the set of all $p\times q$-matrices with elements in $\bbR[x]$. The degree of a polynomial matrix is defined as the largest degree of an element of this matrix. We collect all monomials $x^\alpha$ for $|\alpha|\leq d$ in the polynomial vector $z_d(x)$ of length $l(n,d)\coloneqq \binom{n+d}{d}$ with $
	z_d(x) = \begin{bmatrix}
		1 & x_1 & \cdots & x_n & x_1^2 & x_1x_2 & \cdots & x_n^d
	\end{bmatrix}^\top \in \bbR[x]^{l(n,d)}
$.
\begin{definition}[SOS polynomial matrix]
	A polynomial matrix $S(x)\in\bbR[x]^{p\times p}$ is said to be an SOS matrix if there exists a matrix $T(x)\in\bbR[x]^{q\times p}$ with $q\in\bbN$ such that $
		S(x) = T(x)^\top T(x)
	$. For $p=1$, $S(x)$ is called an SOS polynomial.
\end{definition}
Verifying nonnegativity of a polynomial matrix is difficult in general. Since the SOS property implies nonnegativity of the polynomial matrix, SOS matrices are especially interesting from a computational perspective as we can verify the SOS property via an LMI feasibility condition. This is also known as the Gram matrix method \cite{choi:lam:reznick:1994}.
\begin{proposition}[SOS decomposition]
	Let a polynomial matrix $S(x)\in\bbR[x]^{p\times p}$ have degree $2d$. Then, $S(x)$ is an SOS matrix if and only if there exists a real matrix $\Lambda=\Lambda^\top\succeq 0$ such that 
	$
		S(x) = [z_d(x)\otimes I_p]^\top \Lambda [z_d(x)\otimes I_p]
	$.
\end{proposition}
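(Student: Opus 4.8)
The plan is to establish the equivalence by proving the two implications separately, the ``if'' direction being a Cholesky-type factorization and the ``only if'' direction requiring a short degree-reduction argument that brings an arbitrary SOS certificate into the canonical basis $z_d(x)\otimes I_p$.

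For the ``if'' direction, suppose $\Lambda=\Lambda^\top\succeq 0$ satisfies $S(x)=[z_d(x)\otimes I_p]^\top\Lambda[z_d(x)\otimes I_p]$. Since $\Lambda$ is symmetric positive semidefinite, it admits a factorization $\Lambda=L^\top L$ with $L\in\bbR^{r\times l(n,d)p}$ and $r=\mathrm{rank}(\Lambda)$, obtained, e.g., from an eigenvalue or Cholesky decomposition. Defining $T(x)\coloneqq L[z_d(x)\otimes I_p]\in\bbR[x]^{r\times p}$ then gives $S(x)=T(x)^\top T(x)$, so $S(x)$ is an SOS matrix by definition.

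For the ``only if'' direction, let $S(x)=T(x)^\top T(x)$ for some $T(x)\in\bbR[x]^{q\times p}$. First I would show that $T$ can be taken of degree at most $d$. Write $T=\sum_{k=0}^{d'}T_k$ with $T_k$ the homogeneous part of degree $k$ and $d'=\deg T$. The homogeneous component of degree $2d'$ of $T^\top T$ is exactly $T_{d'}^\top T_{d'}$, with no cancellation from lower-degree terms, so if $2d'>2d=\deg S$ this component must vanish; evaluating at arbitrary $x$ yields $T_{d'}(x)^\top T_{d'}(x)=0$, hence $T_{d'}(x)=0$ for all $x$, i.e., $T_{d'}\equiv 0$, contradicting $\deg T=d'$. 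Therefore $\deg T\leq d$. Each row of $T(x)$ is then a $1\times p$ polynomial vector of degree at most $d$, hence expressible as a constant $1\times l(n,d)p$ row vector times $z_d(x)\otimes I_p$; stacking the rows produces a constant matrix $\hat T\in\bbR^{q\times l(n,d)p}$ with $T(x)=\hat T[z_d(x)\otimes I_p]$. Consequently $S(x)=[z_d(x)\otimes I_p]^\top\hat T^\top\hat T[z_d(x)\otimes I_p]$, and $\Lambda\coloneqq\hat T^\top\hat T=\Lambda^\top\succeq 0$ is the desired Gram matrix.

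I expect the only genuinely nontrivial step to be the degree-reduction argument: once it is known that the SOS certificate $T$ has degree at most $d$, the remainder is bookkeeping with the monomial vector $z_d(x)$ and the Kronecker product. A minor subtlety is that the row dimension $q$ of $T$ is not unique, since zero rows may be appended, but this does not affect the resulting Gram matrix $\hat T^\top\hat T$, and the degenerate case $S\equiv 0$ is excluded by the hypothesis that $S$ has degree $2d$.
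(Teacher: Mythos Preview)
Your argument is correct and is essentially the standard Gram-matrix proof: the ``if'' direction is a square-root factorization of $\Lambda$, and the ``only if'' direction is the leading-term cancellation argument showing $\deg T\le d$, followed by expanding $T$ in the monomial basis $z_d(x)\otimes I_p$. The paper itself does not give a proof of this proposition; it simply cites \cite{chesi:garulli:tesi:vicino:2009} for the details, and your write-up is precisely the argument one finds there.
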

A detailed proof of this proposition can be found in  \cite{chesi:garulli:tesi:vicino:2009}. It is obvious that the characterization of $S(x)$ being SOS is an affine constraint on the matrix $\Lambda$. Verifying the SOS property hence amounts to solving an LMI feasibility problem. Thus, SOS methods provide a computational tool to guarantee global nonnegativity of $S(x)\in\bbR[x]^{p\times p}$, i.e., if $S(x)$ is SOS then $S(x)\succeq 0$ for all $x\in\bbR^n$. 

\section{Data-driven controller design for rational systems}\label{sec:dd_controller_design}
% \section{Data-driven controller design}\label{sec:dd_controller_design}
In this section, we design polynomial state-feedback controllers for nonlinear systems with rational dynamics based only on measured data.
In Section~\ref{sec:dd_system_parametrization}, we first provide a data-driven system parametrization based on measured data. Next, in Section~\ref{sec:dd_controller_design_stabilization}, we provide SOS-based design procedures with robust closed-loop stability and performance, which we apply to numerical examples in Section~\ref{sec:dd_controller_design_examples}.

\subsection{Data-driven system parametrization}\label{sec:dd_system_parametrization}
% \section{Data-driven system parametrization}\label{sec:dd_system_parametrization}
In the following, we provide a simple data-driven parametrization of all rational Systems \eqref{eq:rational_system_dynamics} which are consistent with the measured data $\{x(t_k),\xdot(t_k),u(t_k)\}_{k=1}^{N}$ and with the noise bound $\hat{W}\in\cW$.

We denote the set of all open-loop matrices $A$, $B$, and $P$ which are consistent with the data and the noise bound by
\begin{multline*}
	\SigABP \coloneqq 
	\left\{
		\bmatABP \mathrel{\Big|} Z_p(P, X, \Xdot)
	\right. \\ 
	\left.
		= AZ(X) + BH(X,U) + B_wW,\,W\in\cW
	\right\},
\end{multline*}
where $X = \begin{bmatrix}
	x(t_1) & \cdots & x(t_{N})
\end{bmatrix}$, $U$, $\Xdot$, $Z(X)$ are defined analogously, and
\begin{align*}	
	Z_p(P,X,\Xdot) &= (I_n \kron P) \hZ_p(X,\Xdot) + \Xdot, \\
	\hZ_p(X,\Xdot) &= \begin{bmatrix}
		\tZ_p(x(t_1))\xdot(t_1) & \cdots & \tZ_p(x(t_{N}))\xdot(t_N)
	\end{bmatrix},\\
	H(X,U) &= \begin{bmatrix}
		H(x(t_1))u(t_1) & \cdots & H(x(t_{N}))u(t_{N})
	\end{bmatrix},
\end{align*}
with $\tZ_p(x) = I_n \kron Z_p(x)$.
Note that the set $\SigABP$ also contains the \emph{true} matrices $\cA$, $\cB$ and $\cP$ of the system, i.e., $\begin{bmatrix}
	\cA & \cB & \cP
\end{bmatrix}\in\SigABP$. Moreover, we define the set 
{\small
	\begin{equation*}
		\cM \coloneqq \left\{\bmatABP \mathrel{\Bigg|} 
		\left[\star\right]^\top
		\begin{bmatrix}
			\oQ_w & \oS_w \\ \oS_w^\top & \oR_w
		\end{bmatrix}
		\begin{bmatrix}
			\begin{bmatrix}
				\bmatAB^\top \\
				(I_n \kron P)^\top
			\end{bmatrix} \\
				I
		\end{bmatrix}
		\succeq 0 \right\},
	\end{equation*}
}
where 
\begin{equation*}
	\begin{bmatrix}
		\oQ_w & \oS_w \\ \oS_w^\top & \oR_w
	\end{bmatrix} = 
	\begin{bmatrix}
		\begin{bmatrix}
			-\begin{bmatrix}
				Z(X)\\H(X,U)
			\end{bmatrix}\\\hZ_p(X, \Xdot)
		\end{bmatrix} & 0 \\ 
		\Xdot & B_w
	\end{bmatrix}
	\begin{bmatrix}
		Q_w & S_w \\ S_w^\top & R_w
	\end{bmatrix}
	% \begin{bmatrix}
	% 	\begin{bmatrix}
	% 		-Z(X)\\-H(X,U)\\Z_p(X) * \Xdot
	% 	\end{bmatrix} & 0 \\ 
	% 	\Xdot & B_w
	% \end{bmatrix}^\top,
	\left[\star\right]^\top,
\end{equation*}
which depends on the noise bound and on the measured data through $\Xdot$, $Z(X)$, $H(X,U)$, $\hZ_p(X,\Xdot)$. Then, the following theorem establishes that $\cM$ is an equivalent parametrization of $\SigABP$.

\begin{theorem}\label{thm:set_equivalence}
	Suppose Assumption \ref{ass:noise_bound} is satisfied. Then, it holds that $\SigABP=\cM$.
\end{theorem}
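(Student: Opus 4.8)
The statement $\SigABP = \cM$ is a set equality, so the natural approach is to prove the two inclusions separately, but in fact both follow at once from a single algebraic manipulation: I would show that the defining constraint of $\SigABP$ (existence of $W \in \cW$ with the data equation satisfied) is \emph{equivalent}, for a fixed triple $\bmatABP$, to the quadratic matrix inequality defining $\cM$. The key observation is that the data equation $Z_p(P,X,\Xdot) = AZ(X) + BH(X,U) + B_w W$ can be solved for $W$ uniquely, since $B_w$ has full column rank: $W = B_w^{+}\big(Z_p(P,X,\Xdot) - AZ(X) - BH(X,U)\big)$, where $B_w^{+}$ is a left inverse. So membership in $\SigABP$ is equivalent to: this particular $W$ lies in $\cW$.

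\medskip
\noindent The plan is then to substitute this expression for $W$ into the quadratic form in \eqref{eq:ass_noise_bound} and check that it reproduces exactly the quadratic form defining $\cM$. Concretely, I would write
\[
\begin{bmatrix} W^\top \\ I \end{bmatrix}
= \begin{bmatrix}
\big(Z_p(P,X,\Xdot) - AZ(X) - BH(X,U)\big)^\top (B_w^{+})^\top \\ I
\end{bmatrix},
\]
and recognize that the map from $\begin{bmatrix}[\,A\ B\,]^\top; (I_n\kron P)^\top; I\end{bmatrix}$ (the vector appearing in the definition of $\cM$) to $\begin{bmatrix}W^\top; I\end{bmatrix}$ is linear, given precisely by the block matrix built from $-[Z(X);H(X,U)]$, $\hZ_p(X,\Xdot)$, $\Xdot$, and $B_w$ — using the identities $Z_p(P,X,\Xdot) = (I_n\kron P)\hZ_p(X,\Xdot) + \Xdot$ and the congruence-transformation structure. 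The matrix $\begin{bsmallmatrix}\oQ_w & \oS_w\\ \oS_w^\top & \oR_w\end{bsmallmatrix}$ is defined as exactly the congruence of $\begin{bsmallmatrix}Q_w & S_w\\ S_w^\top & R_w\end{bsmallmatrix}$ by this linear map, so the two quadratic forms coincide identically in the parameters, and the two semidefinite conditions are literally the same inequality. Hence $\bmatABP \in \SigABP \iff \bmatABP \in \cM$.

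\medskip
\noindent One point requiring a little care: when $B_w$ has full column rank but is not square, the data equation is only solvable for $W$ if $Z_p(P,X,\Xdot) - AZ(X) - BH(X,U)$ lies in the column space of $B_w$; for triples where it does not, $\bmatABP \notin \SigABP$ trivially, and I must check that such triples are also excluded from $\cM$. I expect this to follow from $Q_w \prec 0$: the congruence structure ensures the $\oQ_w$ block "sees" the component of $AZ(X)+BH(X,U)-Z_p(P,X,\Xdot)$ orthogonal to $\mathrm{im}(B_w)$ through a negative-definite form, so any triple not consistent with the data violates the $\cM$ inequality as well. Making this orthogonal-complement bookkeeping clean — and verifying that the congruence transformation is indeed set up so that $\cM$'s inequality is the \emph{pushforward} rather than merely \emph{implied by} $\cW$'s — is the main technical obstacle; everything else is direct substitution and matching of blocks using the Kronecker identity $\tZ_p(x) = I_n \kron Z_p(x)$.
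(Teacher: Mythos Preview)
Your plan is essentially the paper's proof. Both reduce the $\cM$ inequality, via the congruence defining $\oQ_w,\oS_w,\oR_w$, to
\[
\begin{bmatrix} F^\top \\ B_w^\top \end{bmatrix}^\top
\begin{bmatrix} Q_w & S_w \\ S_w^\top & R_w \end{bmatrix}
\begin{bmatrix} F^\top \\ B_w^\top \end{bmatrix} \succeq 0,
\qquad F \coloneqq Z_p(P,X,\Xdot) - AZ(X) - BH(X,U),
\]
and then argue equivalence with $W\in\cW$ using full column rank of $B_w$ for one direction and $Q_w\prec 0$ (via $B_w^\perp$) for the other; your ``orthogonal-complement bookkeeping'' is exactly the paper's step of multiplying by $B_w^\perp$. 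One small correction: the two semidefinite conditions are \emph{not} literally the same inequality---the displayed matrix above is $B_w(\cdot)B_w^\top$ applied to the $\cW$ left-hand side, so it is $n\times n$ rather than $m_w\times m_w$, and the equivalence genuinely uses that $B_w$ has full column rank (your last paragraph correctly anticipates this).
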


\begin{proof}
	Note that $\bmatABP\in\cM$ if and only if 
	{
	\begin{equation}\label{eq:cM}
	\begin{bmatrix}F^\top\\B_w^\top	\end{bmatrix}^\top
			\begin{bmatrix}
				Q_w & S_w \\ S_w^\top & R_w
			\end{bmatrix}
			\begin{bmatrix}F^\top\\B_w^\top	\end{bmatrix}
			\succeq 0\,,
		\end{equation}
	}
	where we abbreviate
	\begin{align*}
	F=Z_p(P,X,\Xdot) - A Z(X) - B H(X,U).
	\end{align*}

	(i) \textbf{Proof of $\SigABP\subseteq \cM$:}
		Suppose that $\bmatABP\in\SigABP$, i.e., there exists $W\in\cW$ such that $Z_p(P,X,\Xdot) = AZ(X) + BH(X,U) + B_wW$. Then, $W\in\cW$ together with \eqref{eq:ass_noise_bound} implies with multiplication from the left and right by $B_w$ and its transpose, respectively, 
		\begin{equation*}
			\begin{bmatrix}
				(B_wW)^\top \\ 
				B_w^\top
			\end{bmatrix}^\top 
			\begin{bmatrix}
				Q_w & S_w \\ S_w^\top & R_w
			\end{bmatrix}
			\begin{bmatrix}
					(B_wW)^\top \\ 
					B_w^\top
			\end{bmatrix} 
			\succeq 0\,.
		\end{equation*} 
		Replacing $
			B_wW 
			= Z_p(P, X, \Xdot)
			- A Z(X) - B H(X,U)
		$, we deduce that $\bmatABP$ satisfies Inequality \eqref{eq:cM} and hence, $\bmatABP\in\cM$.
			
	(ii) \textbf{Proof of $\cM\subseteq\SigABP$:}
		Let $\bmatABP\in\cM$, i.e., $\bmatABP$ satisfies Inequality \eqref{eq:cM}. Multiplying Inequality \eqref{eq:cM} from the left and right by $B_w^\perp$ and its transpose, respectively, we obtain
		{\small
			\begin{equation*}
				\left(
					B_w^\perp
					\big(
						Z_p(P,X,\Xdot) 
						- A Z(X) - B H(X,U)
					\big) 
				\right)
				Q_w 
				\left[\star\right]^\top
				\succeq 0\,.
			\end{equation*}
		}
		Since $Q_w \prec 0$, this implies $
			B_w^\perp
			\big(
				Z_p(P,X,\Xdot) 
				- AZ(X) - BH(X,U)
			\big) = 0
		$, i.e., there exists $W$ such that $
			Z_p(P,X,\Xdot) 
			- AZ(X) - BH(X,U)
			= B_w W
		$. Substituting this into Inequality \eqref{eq:cM} yields
		\begin{equation*}
			\begin{bmatrix}
				(B_wW)^\top \\ 
				B_w^\top
			\end{bmatrix}^\top
			\begin{bmatrix}
				Q_w & S_w \\ S_w^\top & R_w
			\end{bmatrix}
			\begin{bmatrix}
				(B_wW)^\top \\ 
				B_w^\top
			\end{bmatrix} 
			\succeq 0\,.
		\end{equation*}
		Since $B_w$ has full column rank, this implies $W\in\cW$ and hence, $\bmatABP\in\SigABP$.
\end{proof}

Via Theorem \ref{thm:set_equivalence}, we can formulate a \emph{single} quadratic matrix inequality equivalently describing all systems consistent with the data, i.e., satisfying the data equation $Z_p(P,X,\Xdot) = AZ(X) + BH(X,U) + B_w W$ for some $W\in\cW$.
The key idea is to choose basis matrices $Z(x)$, $Z_p(x)$, and $H(x)$ such that the resulting polynomial system description \eqref{eq:rational_system_dynamics4} is linear in the unknown parameters and hence, we can follow a similar approach as in the corresponding results for linear systems (compare \cite{waarde:camlibel:mesbahi:2020,berberich:scherer:allgower:2020}). 
We note that an analogous approach is used for data-driven control of polynomial systems in~\cite{guo:depersis:tesi:2020b} and, in particular, Theorem~\ref{thm:set_equivalence} reduces to the corresponding result in~\cite{guo:depersis:tesi:2020b} for polynomial systems.

\subsection{Controller design for robust stability and performance}\label{sec:dd_controller_design_stabilization}
% \subsection{Robust stability}\label{sec:dd_controller_design_stabilization}
Next, we design polynomial state-feedback controllers for System \eqref{eq:rational_system_dynamics} taking the form $u(x)=K(x)Z_K(x)$ with some polynomial matrix $K(x)\in\bbR[x]^{m\times N_K}$ and a known vector of monomials $Z_K(x)\in\bbR[x]^{N_K}$, $Z_K(x)=Y_K(x)x$, which may differ from $Z(x)$.
To this end, we employ the parametrization provided by Theorem~\ref{thm:set_equivalence} to achieve robust closed-loop stability for all rational systems consistent with the measured data and the noise bound. 
In the following, we abbreviate 
\begin{gather*}
	q_1(x) = \begin{bmatrix}
		\begin{bmatrix}
			Y(x) \\ H(x)K(x)Y_K(x)
		\end{bmatrix} \\ 0
	\end{bmatrix},
\quad
	q_2(x) = \begin{bmatrix}
		0 \\ I_n \kron Z_p(x)
	\end{bmatrix}.
\end{gather*}
\begin{theorem}\label{thm:cl_stability_sos}
	Suppose Assumption \ref{ass:noise_bound} holds. If there exist an $n\times n$ matrix $\cY\succ 0$, a polynomial matrix $K(x)\in\bbR[x]^{m\times N_K}$, and a scalar $\tau\geq0$ such that the matrix polynomial 
	\begin{equation}\label{eq:cl_stability_condition_sos}
		Q(x) = -
		\left[\star\right]^\top
		\left[
		\def\arraystretch{1.15}\begin{array}{cc|cc}
			\eps I & \cY & 0 & 0 \\
			\cY & 0 & 0 & 0 \\\hline 
			0 & 0 & \tau \oQ_w & \tau \oS_w \\[1ex]
			0 & 0 & \tau \oS_w^\top & \tau \oR_w
		\end{array}
		\right]
		\left[
		\def\arraystretch{1.0}\begin{array}{cc}
			I & q_2(x)^\top \\
			0 & q_1(x)^\top \\\hline 
			0 & I \\ 
			I & 0 
		\end{array}
		\right]
	\end{equation}
%	\begin{figure*}[!t]
%		\begin{equation}\label{eq:cl_stability_condition_sos}
%			Q(x) = -
%			\left[
%				\def\arraystretch{1.0}\begin{array}{cc}
%					I & \begin{bmatrix}
%						0 \\ I_n \kron Z_p(x)
%					\end{bmatrix}^\top \\
%					0 & \begin{bmatrix}
%						\begin{bmatrix}
%							Y(x)\\H(x)K(x)Y_K(x)
%						\end{bmatrix}\\0
%					\end{bmatrix}^\top \\\hline 
%					0 & I \\ 
%					I & 0 
%				\end{array}
%			\right]^\top
%			\left[
%				\def\arraystretch{1.3}\begin{array}{cc|cc}
%					\eps I & \cY & 0 & 0 \\
%					\cY & 0 & 0 & 0 \\\hline 
%					0 & 0 & \tau \oQ_w & \tau \oS_w \\[1ex]
%					0 & 0 & \tau \oS_w^\top & \tau \oR_w
%				\end{array}
%			\right]
%			\left[
%				\def\arraystretch{1.0}\begin{array}{cc}
%					I & \begin{bmatrix}
%						0 \\ I_n \kron Z_p(x)
%					\end{bmatrix}^\top \\
%					0 & \begin{bmatrix}
%						\begin{bmatrix}
%							Y(x)\\H(x)K(x)Y_K(x)
%						\end{bmatrix}\\0
%					\end{bmatrix}^\top \\\hline 
%					0 & I \\ 
%					I & 0 
%				\end{array}
%			\right]
%		\end{equation}
%		% IEEE uses as a separator
%		\hrulefill
%		% The spacer can be tweaked to stop underfull vboxes.
%		\vspace*{-4pt}
%	\end{figure*}
%	$Q(x)$ in \eqref{eq:cl_stability_condition_sos} 
	is SOS for some $\eps\geq 0$, then the controller $u(x) = K(x) Z_K(x)$ globally stabilizes System \eqref{eq:rational_system_dynamics} for all $\bmatABP\in\SigABP$.
	
	Moreover, if $Q(x)$ is SOS for some $\eps>0$, then the controller globally asymptotically stabilizes System \eqref{eq:rational_system_dynamics} for all $\bmatABP\in\SigABP$.
\end{theorem}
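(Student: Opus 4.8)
The plan is to establish a Lyapunov argument based on the quadratic candidate $V(x) = Z_K(x)^\top \cY^{-1} Z_K(x)$ (or a closely related choice involving $\cY$), and to use the S-procedure together with Theorem~\ref{thm:set_equivalence} to certify the decrease condition simultaneously for every system in $\SigABP$. First I would unpack the SOS condition~\eqref{eq:cl_stability_condition_sos}: since $Q(x)$ is SOS it is in particular positive semidefinite for all $x\in\bbR^n$, so $[\star]^\top \Theta(x) [\star] \preceq 0$ for all $x$, where $\Theta(x)$ is the block matrix in~\eqref{eq:cl_stability_condition_sos}. Multiplying this inequality on the right by a suitably chosen vector built from $\xdot$, $x$, $u=K(x)Z_K(x)$ and on the left by its transpose, the blocks $q_1(x)$, $q_2(x)$ are designed precisely so that the first two block-rows reproduce the closed-loop dynamics~\eqref{eq:rational_system_dynamics4} (with $\begin{bmatrix}A&B&P\end{bmatrix}$ in place of the true parameters), while the $(\oQ_w,\oS_w,\oR_w)$ block encodes, via the definition of $\cM$, the membership constraint $\begin{bmatrix}A&B&P\end{bmatrix}\in\cM$.

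Next I would invoke the S-procedure / dualization: for any fixed $\begin{bmatrix}A&B&P\end{bmatrix}\in\cM$ the quadratic form associated with the lower-right block evaluates to something nonnegative (that is exactly the defining inequality of $\cM$), so multiplying the matrix inequality $[\star]^\top\Theta(x)[\star]\preceq 0$ against the vector that selects this particular $(A,B,P)$ and combining with $\tau\geq 0$ yields $[\star]^\top\big(\begin{smallmatrix}\eps I & \cY\\ \cY & 0\end{smallmatrix}\big)[\star]\preceq 0$ evaluated along the closed-loop trajectory. Writing this out, with $p(x):=\begin{bmatrix}\cP&1\end{bmatrix}\begin{bmatrix}Z_p(x)\\1\end{bmatrix}$ and using $Z_K(x)=Y_K(x)x$, $Z(x)=Y(x)x$, this reads (after the change of variables relating $\cY$ to the Lyapunov matrix) $2\, Z_K(x)^\top \cY^{-1} \dot Z_K(x) \leq -\eps\, \lVert \text{(something vanishing only at }x=0)\rVert^2$ along solutions, i.e. $\dot V(x)\leq 0$ globally when $\eps=0$ and $\dot V(x)<0$ for $x\neq 0$ when $\eps>0$. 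Using Theorem~\ref{thm:set_equivalence} to identify $\SigABP=\cM$, and in particular that the true system $\begin{bmatrix}\cA&\cB&\cP\end{bmatrix}\in\SigABP$, this decrease holds for the actual closed loop.

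From there the conclusion is the standard Lyapunov stability / asymptotic stability argument: $V$ is positive definite and radially unbounded because $\cY\succ 0$ and $Z_K(x)=Y_K(x)x$ vanishes iff $x=0$ (this is where the minimum-degree-one assumption on the monomial vectors is used), so sublevel sets are compact and invariant, giving global stability for $\eps=0$; for $\eps>0$, LaSalle or the strict-decrease version of Barbashin--Krasovskii yields global asymptotic stability. I expect the main obstacle to be the bookkeeping in the first two paragraphs: verifying that the congruence transformation hidden in the $[\star]^\top(\cdot)[\star]$ notation of~\eqref{eq:cl_stability_condition_sos}, after the substitution $u=K(x)Z_K(x)$ and after pre/post-multiplication by $\cY^{-1}$ (or $\diag(\cY^{-1},I)$) to undo the variable change, cleanly produces the closed-loop vector field $p(x)\dot x = \cA Z(x)+\cB H(x)u$ together with exactly the $\cM$-defining form — in other words, matching the structured blocks $q_1,q_2,\oQ_w,\oS_w,\oR_w$ against the dynamics is the delicate step, whereas the S-procedure application and the final Lyapunov argument are routine once that matching is in place. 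A secondary point requiring care is that the decrease is established for $\dot V$ multiplied by the nowhere-zero scalar $p(x)$; since $p(x)$ has constant sign (it is nonzero and continuous on connected $\bbR^n$), dividing through preserves the sign of $\dot V$, so no loss of generality results.
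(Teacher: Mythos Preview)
Your high-level strategy matches the paper's: unpack the SOS condition to a matrix inequality, multiply by a block containing the uncertain $(A,B,P)$, invoke the S-procedure against the $\cM$-defining quadratic, use Theorem~\ref{thm:set_equivalence}, and close with a Lyapunov argument. The bookkeeping you anticipate is indeed the heart of the proof.

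However, your choice of Lyapunov candidate $V(x)=Z_K(x)^\top\cY^{-1}Z_K(x)$ is a genuine gap, not a detail. First, it is dimensionally inconsistent: $\cY$ is $n\times n$ while $Z_K(x)\in\bbR^{N_K}$. Second, and more substantively, even with a $N_K\times N_K$ Lyapunov matrix the argument would not go through: $\dot V$ would involve $\frac{\partial Z_K}{\partial x}\dot x$, but the congruence in~\eqref{eq:cl_stability_condition_sos} produces, after multiplication by the block $\begin{bmatrix}I\\ \bmatAB^\top\\ (I_n\kron P)^\top\end{bmatrix}$, the term $(AY(x)+BH(x)K(x)Y_K(x))\cY$ and its transpose --- there is no Jacobian of $Z_K$ anywhere in the structure. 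The paper uses the \emph{quadratic} candidate $V(x)=x^\top\cY^{-1}x$, and after pre/post-multiplying the resulting $n\times n$ inequality by $x^\top\cY^{-1}$ one recovers $p(x)\dot x$ via $Y(x)x=Z(x)$, $Y_K(x)x=Z_K(x)$. The paper in fact remarks immediately after the proof that extending to polynomial Lyapunov functions is open precisely because the partial derivatives obstruct the identity $(I_n\kron P)(I_n\kron Z_p(x))+I_n=p(x)I_n$ from producing the clean factor.

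Two smaller corrections. The object you multiply $Q(x)$ against is a \emph{matrix} built from $(A,B,P)$ (not a vector in $\dot x,x,u$); the trajectory enters only afterwards through the scalar congruence by $x^\top\cY^{-1}$. And the factor in front of $\dot V$ comes out as $p(x)^2$, not $p(x)$, because $p(x)$ appears once from $q_2$ acting on $(I_n\kron P)^\top$ and once from $Mx=p(x)\dot x$; so no sign argument for $p(x)$ is needed, only $p(x)\neq 0$.
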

\begin{proof}
	Consider the Lyapunov function candidate $V(x)=x^\top\cX x$, $\cX=\cY^{-1}$ for System \eqref{eq:rational_system_dynamics}. This candidate is clearly positive definite since $\cX\succ 0$. We show $\Vdot(x) \leq -\eps \|\cX x\|_2$, which implies stability with respect to the origin for $\eps \geq 0$ and asymptotic stability for $\eps > 0$. 
	%The proof follows similar arguments as the corresponding result for model-based robust control of linear systems~\cite{scherer:weiland:2000}.
	
	% Via left and right multiplication of $Q(x)$ by $\begin{bmatrix}
	% 	\begin{bmatrix}
	% 		\bmatAB & (P\otimes I_n)
	% 	\end{bmatrix} & I
	% \end{bmatrix}$ and its transpose, respectively, and using that $Q(x)\succeq 0$ for all $x\in\mathbb{R}^n$, we obtain
	% \begin{figure*}[!t]
	% 	\begin{equation}\label{eq:sos_condition_before_S-procedure}
	% 		\begin{bmatrix}
	% 			\begin{bmatrix}
	% 				\bmatAB^\top \\
	% 				(P\otimes I_n)^\top
	% 			\end{bmatrix} \\ I
	% 		\end{bmatrix} ^\top
	% 		\left(
	% 			\left[\star\right]^\top
	% 			\begin{bmatrix}
	% 				0 & \cY & 0 \\ 
	% 				\cY & 0 & 0 \\
	% 				0 & 0 & \eps I
	% 			\end{bmatrix}
	% 			\begin{bmatrix}
	% 				\begin{bmatrix}
	% 					\begin{bmatrix}
	% 						Y(x)\\H(x)K(x)Y(x)
	% 					\end{bmatrix}\\0
	% 				\end{bmatrix}^\top & 0 \\ 
	% 				\begin{bmatrix}
	% 					0 \\ (Z_p(x) \otimes I_n)
	% 				\end{bmatrix}^\top & I \\
	% 				0 & \cY
	% 			\end{bmatrix} 
	% 			+ \tau
	% 			\begin{bmatrix}
	% 				\oQ_w & \oS_w \\ 
	% 				\oS_w^\top & \oR_w
	% 			\end{bmatrix}
	% 		\right)
	% 		\begin{bmatrix}
	% 			\begin{bmatrix}
	% 				\bmatAB^\top \\
	% 				(P\otimes I_n)^\top
	% 			\end{bmatrix} \\ I
	% 		\end{bmatrix} 
	% 		\preceq 0
	% 	\end{equation}
	% 	% IEEE uses as a separator
	% 	\hrulefill
	% 	% The spacer can be tweaked to stop underfull vboxes.
	% 	\vspace*{4pt}
	% \end{figure*}
	% \eqref{eq:sos_condition_before_S-procedure} for all $x\in\bbR^n$. 
	Using that $Q(x)\succeq 0$ for all $x\in\bbR^n$, we obtain 
	\begin{equation*}
		-\begin{bmatrix}
			I \\
			\begin{bmatrix}
				\bmatAB^\top \\
				(I_n \kron P)^\top
			\end{bmatrix}
		\end{bmatrix} ^\top
		Q(x)
		\begin{bmatrix}
			I \\
			\begin{bmatrix}
				\bmatAB^\top \\
				(I_n \kron P)^\top
			\end{bmatrix}
		\end{bmatrix} \preceq 0
	\end{equation*}
	for all $x\in\bbR^n$. 
	% via left and right multiplication of $Q(x)$ by $\begin{bmatrix}
	% 	\begin{bmatrix}
	% 		\bmatAB & (P\otimes I_n)
	% 	\end{bmatrix} & I
	% \end{bmatrix}$ and its transpose, respectively.
	Then, applying the S-procedure (cf.~\cite{scherer:weiland:2000,boyd:vandenberghe:2004}) yields for all~$x\in\bbR^n$ 
	{\normalsize
		\begin{multline}\label{eq:cl_stability_condition_s_procedure}
			x^\top\cX
			\left(
				(AY(x) + BH(x)K(x)Y_K(x))\cY p(x) + p(x)^2\eps I
			\right.\\
			\left.
				+ p(x)\cY 
				(AY(x) + BH(x)K(x)Y_K(x))^\top
			\right)
			\cX x 
			\leq 0, \\
			\forall \bmatABP\,:\,
			\left[\star\right]^\top
			\begin{bmatrix}
				\oQ_w & \oS_w \\ \oS_w^\top & \oR_w
			\end{bmatrix}
			\begin{bmatrix}
				\begin{bmatrix}
					\bmatAB^\top \\
					(I_n \kron P)^\top
				\end{bmatrix} \\ I
			\end{bmatrix} \succeq 0\,,
		\end{multline}
	}
	where the first inequality is obtained by additionally multiplying with $x^\top\cX$ and its transpose from left and right, respectively, and by using that $(I_n \kron P)(I_n \kron Z_p(x))+I_n=p(x)I_n$.
	Let now $\bmatABP\in\SigABP$. By Theorem~\ref{thm:set_equivalence}, this implies $\bmatABP\in\cM$ and hence, together with \eqref{eq:cl_stability_condition_s_procedure}, $Z(x)=Y(x)x$, and $Z_K(x)=Y_K(x)x$, 
	\begin{multline*}
		\hspace*{-7pt}
		p(x)\left(
			x^\top\cX (AZ(x)+BH(x)K(x)Z_K(x)) + p(x) \eps \|\cX x\|_2 
		\right.\\
		\left.
			+ (AZ(x)+BH(x)K(x)Z_K(x))^\top\cX x
		\right) 
		 \leq 0\,.
	\end{multline*}
	The system dynamics~\eqref{eq:rational_system_dynamics} imply $p(x)\xdot = AZ(x)+BH(x)K(x)Z_K(x)$ such that we obtain the Lyapunov inequality $p(x)^2(x^\top\cX \xdot + \xdot^\top\cX x + \eps \|\cX x\|_2) \leq 0$. 
	Using that $p(x)^2\neq0$ for all $x\in\mathbb{R}^n$, this implies stability for any $\bmatABP\in\SigABP$.
	Asymptotic stability follows for $\eps>0$, which thus concludes the proof.
\end{proof}
Note that the polynomial matrix $Q(x)$ in~\eqref{eq:cl_stability_condition_sos} is not linear in the decision variables $\cY$ and $K(x)$ and hence, for a practical implementation, it needs to be transformed into a linear SOS condition.
This is possible following standard steps from model-based robust control of linear systems (compare~\cite{scherer:weiland:2000}): Defining the new variable $L(x)=K(x)Y_K(x)\cY$, 
%
%we obtain a representation of $Q(x)$ equivalent to the one in Theorem~\ref{thm:cl_stability_sos} but linear in the decision variables. To be precise, we require that $Q(x)$ is SOS, where
%\begin{equation}\label{eq:cl_stability_condition_sos_linear}
%	Q(x) = -\begin{bmatrix}
%		\tau\oR_w + \eps I 
%		&\tau\oS_w^\top + q_1(x)^\top + \eps q_2(x)^\top
%		\\
%		\star
%		& \tau\oQ_w + q(x) + \eps q_2(x)q_2(x)^\top
%	\end{bmatrix},
%\end{equation} 
%with 
%\begin{gather*}
%	q(x) = q_1(x)q_2(x)^\top + q_2(x)q_1(x)^\top ,
%	\\
%	q_1(x) = \begin{bmatrix}
%		\begin{bmatrix}
%			Y(x)\cY \\ H(x) L(x)
%		\end{bmatrix} \\ 0
%	\end{bmatrix} ,
%	\quad
%	q_2(x) = \begin{bmatrix}
%		0 \\ I_n \kron Z_p(x)
%	\end{bmatrix}.
%\end{gather*}
%Note that 
$Q(x)$ is linear in the decision variables $\cY$, $L(x)$, and $\tau$. After finding variables $\cY$, $L(x)$, and $\tau$ such that $Q(x)$ is SOS, a controller stabilizing the unknown system can be implemented as
\begin{equation}\label{eq:stabilizing_input}
	u(x) = K(x)Z_K(x) = K(x) Y_K(x) x = L(x)\cY^{-1} x\,.
\end{equation}
Hence, Theorem~\ref{thm:cl_stability_sos} provides a simple and direct method to design a controller which robustly stabilizes the rational System~\eqref{eq:rational_system_dynamics} for all matrices which are consistent with the data and the noise bound. Moreover, the number of decision variables of the proposed SOS program is independent of the data length which allows us to consider large data sets for the controller design. 
More precisely, the feasibility problem in Theorem~\ref{thm:cl_stability_sos} has overall $1+\frac{n(n+1)}{2}+mN_K$ decision variables, i.e., the number of decision variables scales quadratically with the system order $n$ as in standard robust control and linearly with the number of monomials in $Z_K(x)$.

Theorem~\ref{thm:cl_stability_sos} reduces to a recently obtained result in~\cite{guo:depersis:tesi:2020b} for polynomial system dynamics, i.e., if $d_i(x)=e_{ij}(x)=1$ for all $i=1,\dots,n$ and $j=1,\dots,m$.
The above result is not a straightforward consequence of the result in~\cite{guo:depersis:tesi:2020b} due to the role of the polynomial $p(x)$ in the proof.
More precisely, only by combining the bound on $\begin{bmatrix}A&B&P\end{bmatrix}$ in Theorem~\ref{thm:set_equivalence} with the fact that a quadratic Lyapunov function is used, we can obtain~\eqref{eq:cl_stability_condition_s_procedure} via the identity
\begin{align*}
(I_n \kron P)(I_n \kron Z_p(x))+I_n=p(x)I_n
\end{align*} 
in the proof.
If, e.g., more general polynomial Lyapunov functions $V(x) = Z(x)^\top\cX(x)Z(x)$ are used, the partial derivatives $\frac{\partial Z(x)}{\partial x}$ prohibit this argument and thus, an extension of our results to polynomial Lyapunov functions is an interesting issue for future research.

An additional advantage of our approach is that the above arguments follow the line of LMI-based robust control techniques from~\cite{scherer:weiland:2000} and it is straightforward to extend them to more general problem formulations such as including prior knowledge on the system (cf.~\cite{berberich:scherer:allgower:2020}) or designing a controller which not only stabilizes the closed loop but also enforces a desired performance specification.
To this end, we consider robust quadratic performance (cf. \cite{scherer:weiland:2000}) with index $\begin{bmatrix}
	Q_p & S_p \\ S_p^\top & R_p
\end{bmatrix}$, $R_p\succ 0$, for the system $\xdot = f_r(x) + g_r(x)u + B_p w_p$ and performance output $z_p = C Z(x) + Du + D_p w_p$. 
That is, for any trajectory of this system there exists $\delta >0$ such that
\begin{equation*}
	\int_0^{\infty} 
	\left[\star\right]^\top 
	\begin{bmatrix}
		Q_p & S_p \\ S_p^\top & R_p
	\end{bmatrix}
	\begin{bmatrix}
		w_p(t) \\ z_p(t)
	\end{bmatrix}
	\dd t
	\leq -\delta^2 \int_0^\infty \|w_p(t)\|_2^2 \dd t\,.
\end{equation*}
For instance, $Q_p=-\gamma^2 I$, $S_p=0$, $R_p=I$ corresponds to an $\cL_2$-gain bound $\gamma$ on $w_p \mapsto z_p$.
Following similar arguments as in Theorem \ref{thm:cl_stability_sos}, it can be shown that robust quadratic performance holds for the channel $w_p \mapsto z_p$ if there exist variables $\cY$, $L(x)$ and $\tau$ such that
\begin{equation}\label{eq:performance_sos}
	\left[
		\def\arraystretch{1.15}\begin{array}{cc|cc}
			\multicolumn{2}{c|}{\multirow{2}{*}{$Q(x)$}}
			& \star
			& \star
			\\
			&
			& \star
			& \star
			\\\hline
			-B_p^\top - S_p q_3(x)
			& -B_p^\top q_2(x)^\top
			& -\cQ_p
			& \star 
			\\
			q_3(x)
			& q_3(x)q_2(x)^\top 
			& D_p 
			& R_p^{-1}
		\end{array}
	\right]
\end{equation} 
%\begin{equation}\label{eq:performance_sos}
%	\left[
%	\begin{array}{cccc}
%		\multicolumn{2}{c}{\multirow{2}{*}{$Q(x)$}}
%		& \star
%		& \star
%		\\
%		&
%		& \star
%		& \star
%		\\
%		-B_p^\top - S_p q_3(x)
%		& -B_p^\top  
%		\begin{bmatrix}
%			0 \\ I_n \kron Z_p(x)
%		\end{bmatrix}^\top
%		& -\cQ_p
%		& \star 
%		\\
%		q_3(x)
%		& q_3(x) 
%		\begin{bmatrix}
%			0 \\ I_n \kron Z_p(x)
%		\end{bmatrix}^\top 
%		& D_p 
%		& R_p^{-1}
%	\end{array}
%	\right]
%\end{equation}
is SOS, where $q_3(x) = CY(x)\cY + DL(x)$ and $\cQ_p = Q_p + S_p D_p + D_p^\top S_p^\top$. 
If this problem is feasible, then the controller in \eqref{eq:stabilizing_input} guarantees quadratic performance robustly for all rational systems which are consistent with the measured data and the noise level. The proof of this fact is omitted due to space reasons.

We note that the authors in \cite{martin:allgower:2020} use related tools to analyze \emph{discrete-time} polynomial systems w.r.t. dissipativity properties but an extension to controller design in discrete-time is challenging. 

\subsection{Numerical example}\label{sec:dd_controller_design_examples}
%\subsection{Numerical example}\label{sec:dd_controller_design_examples}
Finally, we illustrate the applicability of the proposed approach with two numerical examples: an academic example as well as a realistic model used for drug distribution \cite{chappell:godfrey:vajda:1990}. For the latter, we also incorporate performance specifications.
The simulations in this paper are implemented in \textsc{Matlab} using YALMIP~\cite{lofberg:2004} with its SOS module~\cite{lofberg:2009} and the solver MOSEK~\cite{mosek:2020}.

\begin{example}\label{ex_1}
	Consider the rational system 
	\begin{equation}\label{eq:exmp_system}
		\begin{aligned}
			\xdot_1 &= \frac{x_2^2}{1+x_1^2} + u_1\,,\\
			\xdot_2 &= x_1 x_2 + x_2 u_2\,.
		\end{aligned}
	\end{equation}
	According to \eqref{eq:rational_system_dynamics2}, any trajectory of this system also satisfies 
	\begin{equation*}
		\begin{aligned}
			(1+x_1^2)\xdot_1 &= x_2^2 + (1 + x_1^2)u_1\,,\\
			(1+x_1^2)\xdot_2 &= x_1 x_2 + x_1^3 x_2 + (x_2 + x_1^2x_2) u_2 \,.
		\end{aligned}
	\end{equation*} 
	Following Section~\ref{sec:problem_setting}, we define the true (unknown) system matrices $
		\cA = \begin{bmatrix}
			0 & 0 & 1 & 0 & 0 \\
			0 & 0 & 0 & 1 & 1
		\end{bmatrix}
	$, $
		\cB = \begin{bmatrix}
			1 & 1 & 0 & 0 \\
			0 & 0 & 1 & 1
		\end{bmatrix}
	$, and $
		\cP = 1
	$, and the vectors of monomials $
		Z(x) = \begin{bmatrix}
			x_1 & x_2 & x_2^2 & x_1 x_2 & x_1^3 x_2
		\end{bmatrix}^\top
	$, $
		Z_p(x) = x_1^2
	$, and $
		H(x) = \begin{bmatrix}
				1 & x_1^2 & 0 & 0 \\
				0 & 0 & x_2 & x_1^2 x_2
		\end{bmatrix}^\top
	$. 
	For the data generation, we consider $B_w=I$ and noise sampled uniformly from the ellipse $\|\what(t_k)\|_2 \leq \wbar$, $k=1,\dots,N$, leading to the choices $Q_w=-I$, $S_w=0$, and $R_w=\wbar^2NI$ as introduced in Section~\ref{sec:problem_setting}. 
	
	\begin{table*}[!t]
		\vspace{4pt}
		\centering
		\caption{Number of feasible designs for Example \ref{ex_1}.}
		\label{tab:exmp_1}
		\vspace*{-5pt}
		\begin{tabular}{c|ccccccc}
			$N$ & $\wbar = \num{1e-6}$ & $\wbar = \num{1e-4}$ & $\wbar = \num{1e-3}$ & $\wbar = \num{2e-3}$ & $\wbar=\num{5e-3}$ & $\wbar = \num{1e-2}$ & $\wbar = \num{1e-1}$ \\\hline
			$20$	& $20$ 	& $19$ 	& $0$ 	& $0$ 	& $0$ 	& $0$ 	& $0$ \\
			$100$	& $19$ 	& $15$ 	& $1$ 	& $0$ 	& $0$ 	& $1$ 	& $0$ \\
			$1000$	& $20$ 	& $19$ 	& $14$ 	& $13$ 	& $0$ 	& $0$ 	& $0$ \\
			$10000$ & $20$ 	& $20$	& $20$ 	& $17$ 	& $20$ 	& $19$ 	& $9$ \\
			$20000$ & $19$ 	& $20$	& $20$	& $18$ 	& $18$	& $17$ 	& $19$
		\end{tabular}
		\vspace*{-15pt}
	\end{table*}
	Since the considered system is open-loop unstable, we generate the data via concatenation of multiple trajectories of length $N_d=5$. More precisely, we measure $d=N/N_d$ trajectories with data matrices $\{\Xdot_1,X_1,U_1\},\ldots,\{\Xdot_d,X_d,U_d\}$, which are affected by noise matrices $\hat{W}_1,\ldots,\hat{W}_d$, respectively, and we define $
	\Xdot = \begin{bmatrix}
		\Xdot_1 & \cdots & \Xdot_d
	\end{bmatrix}
	$, $
	X = \begin{bmatrix}
		X_1 & \cdots & X_d
	\end{bmatrix}
	$, $
	U = \begin{bmatrix}
		U_1 & \cdots & U_d
	\end{bmatrix}
	$, and consider a noise bound of the form $
	\begin{bmatrix}
		\hat{W}_1 & \cdots & \hat{W}_d
	\end{bmatrix}\in\cW
	$. We use equidistant data points generated by the true system dynamics with sampling rate $T_s = 0.001$ and initial state drawn uniformly from the interval $[-1,1]^2$. At each time step, we inject an input sampled uniformly from $[-5,5]^2$.

	In the following, we analyze the feasibility of Theorem~\ref{thm:cl_stability_sos} for different data lengths and noise bounds with $Z_K(x)=Z(x)$.
	Since Theorem~\ref{thm:cl_stability_sos} provides conditions for \emph{robust} stabilization of all systems consistent with the data and hence, only sufficient conditions for stabilization of~\eqref{eq:exmp_system}, the feasibility can vary when considering different noise instances satisfying the same bound. 
	Therefore, for each combination of noise level $\wbar$ and data length $N$, we perform $20$ experiments and record how often the resulting SOS condition on $Q(x)$ is feasible with $\eps=\num{1e-7}$ and leads to an asymptotically stabilizing controller. 
	Table~\ref{tab:exmp_1} displays the number of successful designs of the controller according to our requirements.
	First, note that the approach requires the data to be sufficiently rich, i.e., it does not lead to a stabilizing controller if the data length is too small. 
	The definition of $\oQ_w$ requires at least $N \geq N_z + N_u + nN_p=11$ samples but due to the noise, even more samples are required. 
	Generally, the feasibility of the conditions in Theorem~\ref{thm:cl_stability_sos} improves for increasing data lengths and deteriorates for increasing noise levels. 
	The improvement in the data length is not strictly monotone since the noise bound in Assumption~\ref{ass:noise_bound} does not exactly capture the actual pointwise-in-time noise bound.
	Interestingly, even for the considered \emph{two-dimensional} example, conservatism is reduced and robustness is increased if the number of data points is as large as $N=20000$.
	Finally, we note that the number of decision variables of the proposed approach is independent of the data length and hence, the SOS problem in~\eqref{eq:cl_stability_condition_sos} can be solved in less than fifteen seconds on a standard Intel Core i7 notebook even for $N=20000$. 
\end{example}

\begin{example}\label{ex_2}
	To demonstrate the practicality of our approach, we also apply it to a two-compartment model used to describe a drug distribution \cite{chappell:godfrey:vajda:1990}, i.e., the system
	\begin{equation}\label{eq:exmp_system_2}
		\begin{aligned}
			\xdot_1 &= -\frac{x_1}{5+x_1} - x_1 + x_2 + u\,,\\
			\xdot_2 &= x_1 - x_2\,.
		\end{aligned}
	\end{equation}
	Note that this system violates our standing assumption $p(x)\neq0$ for all $x\in\mathbb{R}^n$.
	Nevertheless, as we show in the following, the presented approach can be used to successfully design a robust controller with local closed-loop stability and good practical performance.
	For the data generation, we proceed as for Example~\ref{ex_1} with $\wbar=\num{1e-1}$ and $d=40$ trajectories of length $N_d=5$ sampled uniformly from $x(t_0)\in[-2,2]^2$, $u(t_k)\in[-5,5]$ and $\|\what(t_k)\|_2 \leq \wbar$ for $k=1,\dots,N_d$. 
	
	We choose $Q_p=-\gamma^2 I$, $S_p=0$, $R_p=I$, $B_p=I$, and $z_p=x$ such that our design objective is an $\cL_2$-gain bound of $\gamma=400$ on the channel $w_p\mapsto x$.
	Then, we solve \eqref{eq:performance_sos} to obtain a controller. 
	The phase portrait of the resulting closed-loop behavior with the controller designed using \eqref{eq:performance_sos} is depicted in Fig.~\ref{fig:exmp_closed_loop}. It shows that the controller stabilizes the system at the origin for $x\in(-5,5)\times(-5,15)$ even though $p(x)\neq 0$ is not satisfied globally.	
	\begin{figure}[!t]
		\centering
		\includegraphics[width=\linewidth]{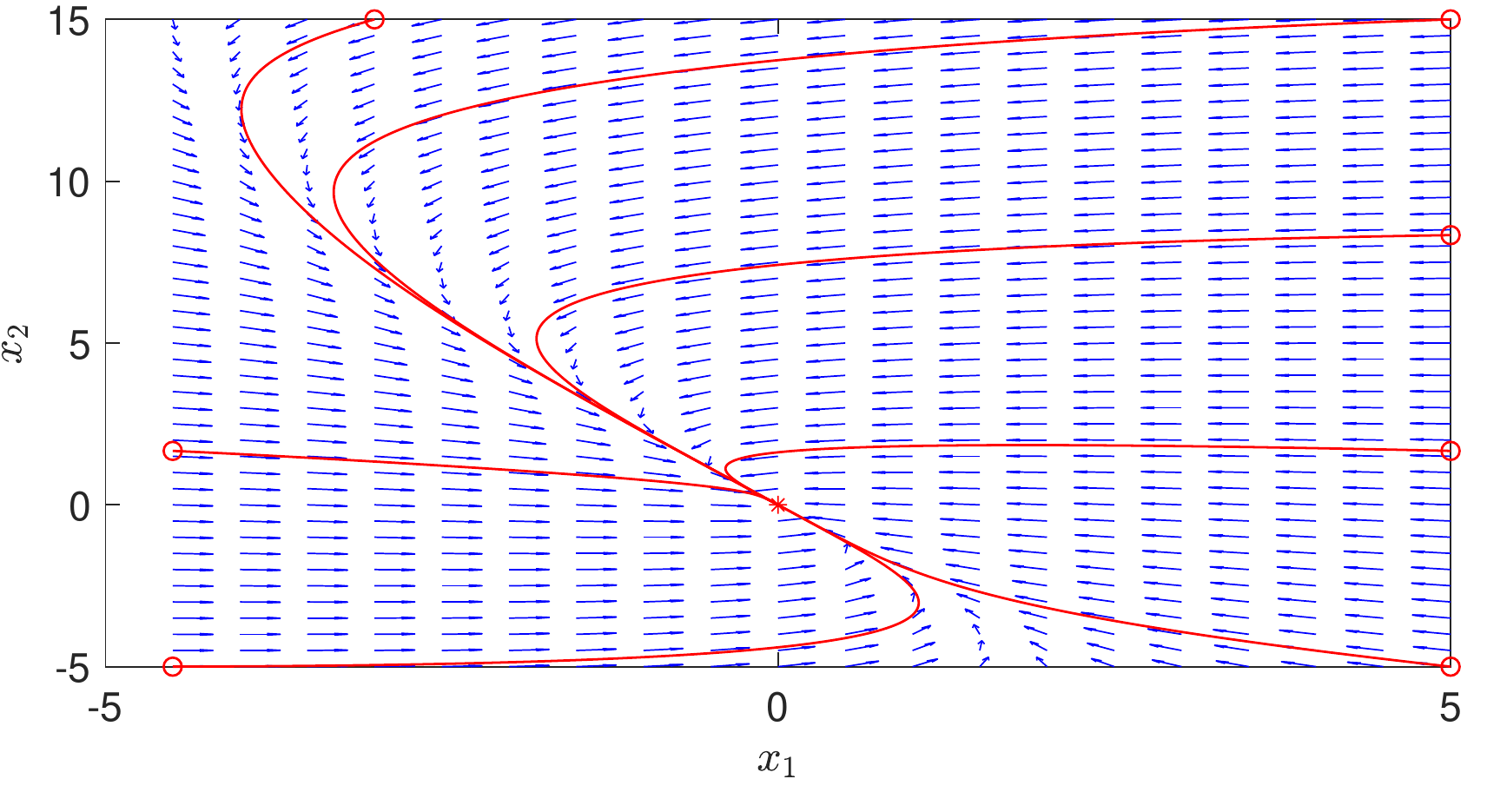}
		\vspace*{-22pt}
		\caption{Vector plot of the closed-loop system in Example \ref{ex_2} with trajectories starting at different initial conditions ({\color{red}$\circ$}) converging to the origin ({\color{red}$*$}).}
		\label{fig:exmp_closed_loop}
		\vspace*{-16pt}
	\end{figure}
	To illustrate the benefits of the performance criterion, we also compute a stabilizing controller based on \eqref{eq:cl_stability_condition_sos}. Compared to this controller, the controller computed via \eqref{eq:performance_sos} leads to a $20\%$ reduction of the $\cL_2$-norm of the performance output $z_p$.
	For this example, \eqref
	{eq:performance_sos} is solvable in less than one second.
\end{example} 

\section{Beyond polynomial basis functions}\label{sec:general_nonlinearities}
% \section{Beyond polynomial basis functions}\label{sec:general_nonlinearities}
In this section, we propose a data-driven control approach for systems which are linear in general, possibly non-polynomial basis functions. 
The main idea relies on lifting the nonlinear system to an extended state-space with polynomial dynamics and then applying the results of Section~\ref{sec:dd_controller_design} for robust controller design.
%Finally, we illustrate the proposed method with a practically relevant numerical example.

% \subsection{Problem setting}\label{sec:nonlinear_systems}
In the following, we consider nonlinear systems 
\begin{equation}\label{eq:nonlinear_system_dynamics}
    \xidot = f(\xi) + g(\xi)u,
\end{equation}
where the nonlinear functions $f\func,g\func$ can be written as linear combinations of known basis functions $\{\psi_i(\xi)\}_{i=1}^L$. 
Under suitable assumptions on these basis functions, \eqref{eq:nonlinear_system_dynamics} can be lifted to a polynomial system in an extended state-space.
The idea was presented in \cite{gu:2011} as part of a model order reduction method and was recently applied in the context of system identification \cite{qian:kramer:peherstorfer:willcox:2020} and extended dynamic mode decomposition \cite{netto:susuki:krishnan:zhang:2020}. To the best of our knowledge, this work is the first which uses this idea for data-driven control.

The main concept is to exploit the invariance of the chosen basis functions, i.e., the fact that the derivatives of the basis functions can be described solely in terms of the basis functions themselves. To this end, we include each basis function as an additional state coordinate and thus, we obtain the dynamics of the extended state by building the Lie derivative of each basis function w.r.t. \eqref{eq:exmp_nonlinear_dynamics}. Due to the above-described invariance property of the basis functions, we can substitute each nonlinearity, i.e., composition of the basis functions, by the respective state coordinates corresponding to the basis functions.
This so-called \emph{polynomialization} leads to a system representation of \eqref{eq:nonlinear_system_dynamics} which has a larger state dimension but can be described by a polynomial function linear in $u$ \cite[Thm. 1]{gu:2011}. 

Since $f\func$ and $g\func$ are assumed to be linear in the basis functions, we find $\{\psi_i(\xi)\}_{i=1}^L$ and define $x=\Psi(\xi)=\begin{bmatrix}
	\psi_1(\xi) & \cdots & \psi_L(\xi)
\end{bmatrix}^\top$ such that System \eqref{eq:nonlinear_system_dynamics} can be polynomialized leading to
\begin{equation}\label{eq:nonlinear_system_dynamics2}
	\xdot = A Z(x) + B H(x) u \,,
\end{equation}
where $Z(x)\in\bbR[x]^{N_z}$ and $H(x)\in\bbR[x]^{N_u\times m}$. 
Possible basis functions consist of a composition of suitable \emph{elementary} functions, e.g., the functions listed in Table~\ref{tab:elementary_functions}. 
Note that, due to invariance, some elementary functions can only be chosen in pairs, e.g., since $\cos(x)$ is the derivative of $\sin(x)$.
The approach is not limited to the elementary functions in Table~\ref{tab:elementary_functions}, but qualifies for every elementary function $z(x)$ whose gradient $\dpartial{z(x)}{x}$ is polynomial in $z$. Nonlinear systems linear in such basis functions are common in, e.g., chemical rate equations, circuit simulation and mechanical applications. 
\begin{example}
    Consider the nonlinear system $\xidot = \tanh(\xi) + u$. By introducing $\psi(\xi)=\tanh(\xi)$ and $x=(\xi,\psi(\xi))$, we obtain
    \begin{align*}
        \xdot_1 &= x_2 + u\,, \\
        \xdot_2 &= \dpartial{\psi(\xi)}{\xi}\xidot = (1-\tanh(\xi)^2)(\tanh(\xi) + u) \\
        &= x_2 - x_2^3 + (1-x_2^2)u\,,
    \end{align*}
    which is polynomial in $x$ and linear in $u$. %Note that $\tanh\func$ is a composition of elementary functions shown in Table~\ref{tab:elementary_functions}.
\end{example}
In general, the polynomialization is not unique, i.e., by introducing more basis functions we obtain a larger state dimension but a possibly smaller polynomial degree. 
Note that the complexity of the robust control approach presented in Section~\ref{sec:dd_controller_design} scales both with the system dimension and the degree of the polynomials, leading to a trade-off for a suitable choice of the polynomialization.

Polynomialization allows us to consider general nonlinear systems with possibly non-polynomial basis functions in the framework of direct data-driven control of polynomial systems.
Similar to the discussion in Section~\ref{sec:problem_setting}, we allow for a perturbation in the measured data taking the form 
\begin{equation}\label{eq:nonlinear_system_dynamics3}
    \xdot = A Z(x) + B H(x) u + B_w w\,,
\end{equation}
where $w$ describes an unknown disturbance sequence perturbing the polynomial equation \eqref{eq:nonlinear_system_dynamics2}.
Hence, designing a stabilizing controller for the nonlinear system \eqref{eq:nonlinear_system_dynamics} reduces to controller design of the polynomial system \eqref{eq:nonlinear_system_dynamics3}. Since System~\eqref{eq:nonlinear_system_dynamics3} is a special case of System~\eqref{eq:rational_system_dynamics4} with $P=Z_p(x)=\emptyset$ and $p(x)=1$, we can apply Theorem~\ref{thm:cl_stability_sos} for controller design.

\begin{table}[!t]
	\vspace{4pt}
	\centering
	\caption{Elementary functions for polynomialization.}
	\label{tab:elementary_functions}
	\vspace*{-5pt}
	\begin{tabular}{c|cc}
		\hline
		Function & Lie derivative \\\hline
		$z = e^x$ & $\zdot = z \xdot$ \\\hline
		$z = \frac{1}{x + a}$, $a\in\bbR$ & $\zdot = -z^2 \xdot$ \\\hline
		$z_1 = \sin(x)$ & $\zdot_1 = -z_2 \xdot$ \\
		$z_2 = \cos(x)$ & $\zdot_2 = -z_1 \xdot$ \\\hline
		$z_1 = \ln(x)$ & $\zdot_1 = z_2 \xdot$ \\
		$z_2 = x^{-1}$ & $\zdot_2 = -z_2^2 \xdot$ \\\hline
		$z_1 = \sqrt{x}$ & $\zdot_1 = \frac{1}{2} z_2 \xdot$ \\
		$z_2 = \frac{1}{\sqrt{x}}$ & $\zdot_2 = -\frac{1}{2}z_2^3 \xdot$ \\\hline
	\end{tabular}
	\vspace*{-15pt}
\end{table}
\begin{remark}
    Generally, polynomialization introduces some conservatism, i.e., System \eqref{eq:nonlinear_system_dynamics} can be described by \eqref{eq:nonlinear_system_dynamics2} but not vice versa. 
    For equivalence, we would need to explicitly include equality constraints ensuring the dependencies of the basis functions.
    Nevertheless, the presented approaches allows us to design controllers for unknown systems \eqref{eq:nonlinear_system_dynamics} with general nonlinear basis functions using only measured data. 
    As we illustrate with the following example, this allows us to solve practically relevant control problems which cannot be handled using existing approaches.
\end{remark}

\begin{example}\label{sec:general_nonlinearities_example}
	Consider the system dynamics of an undamped pendulum
	\begin{equation}\label{eq:exmp_nonlinear_dynamics}
		\begin{aligned}
			\xidot_1 &= \xi_2 \\
			\xidot_2 &= - \frac{b}{m} \xi_2 - \frac{g}{l} \sin(\xi_1) + u
		\end{aligned}
	\end{equation}
	with $b=0$, $m=1$, $g=9.81$, $l=1$. We define the elementary functions $\psi_1(\xi) = \sin(\xi_1)$ and $\psi_2(\xi)=\cos(\xi_1)$. Then, we can polynomialize \eqref{eq:exmp_nonlinear_dynamics} with the extended state $x=\begin{bmatrix}
		\xi^\top & \sin(\xi_1) & \cos(\xi_1)
	\end{bmatrix}^\top$ leading to the polynomial dynamics
	\begin{equation*}
		\xdot = \begin{bmatrix}
			0 & 1 & 0 & 0 & 0 & 0 \\
			0 & -\frac{b}{m} & -\frac{g}{l} & 0 & 0 & 0\\
			0 & 0 & 0 & 0 & 0 & 1 \\
			0 & 0 & 0 & 0 & -1 & 0
		\end{bmatrix} Z(x) + \begin{bmatrix}
			0 \\ 1 \\ 0 \\ 0
		\end{bmatrix} u,
	\end{equation*}
	where $Z(x)=\begin{bmatrix}
		x^\top & x_2 x_3 & x_2 x_4
	\end{bmatrix}^\top$. In the following, we apply Theorem~\ref{thm:cl_stability_sos} to design a stabilizing controller for~\eqref{eq:exmp_nonlinear_dynamics} based on noisy data as in~\eqref{eq:nonlinear_system_dynamics3}. To be precise, we seek for a controller stabilizing the unstable steady-state $\xi_s = (\pi,0)$. 

	For the data generation, we choose $N=2000$, $B_w=I$, and a noise bound $\wbar=\num{1e-4}$. As for the examples in Section~\ref{sec:dd_controller_design_examples}, we consider $Q_w=-I$, $S_w=0$, and $R_w=\wbar^2NI$.
	We use equidistant data points generated by the true system dynamics with sampling rate $T_s = 0.001$. Initial state, input and noise are sampled uniformly from $x(t_0)\in[-2,2]^2$, $u(t_k)\in[-10,10]$, and the ellipse $\|\what(t_k)\|_2\leq\wbar$ for $k=1,\dots,N$, respectively.	
	
	Solving the SOS problem in Theorem~\ref{thm:cl_stability_sos}, we obtain $u(x)$ as in \eqref{eq:stabilizing_input}. To stabilize $\xi_s$, we apply the control law $u(x-\Psi(\xi_s))$ to System \eqref{eq:exmp_nonlinear_dynamics}.
	\begin{figure}[t]
		\vspace{4pt}
		\centering
		\includegraphics[width=\linewidth]{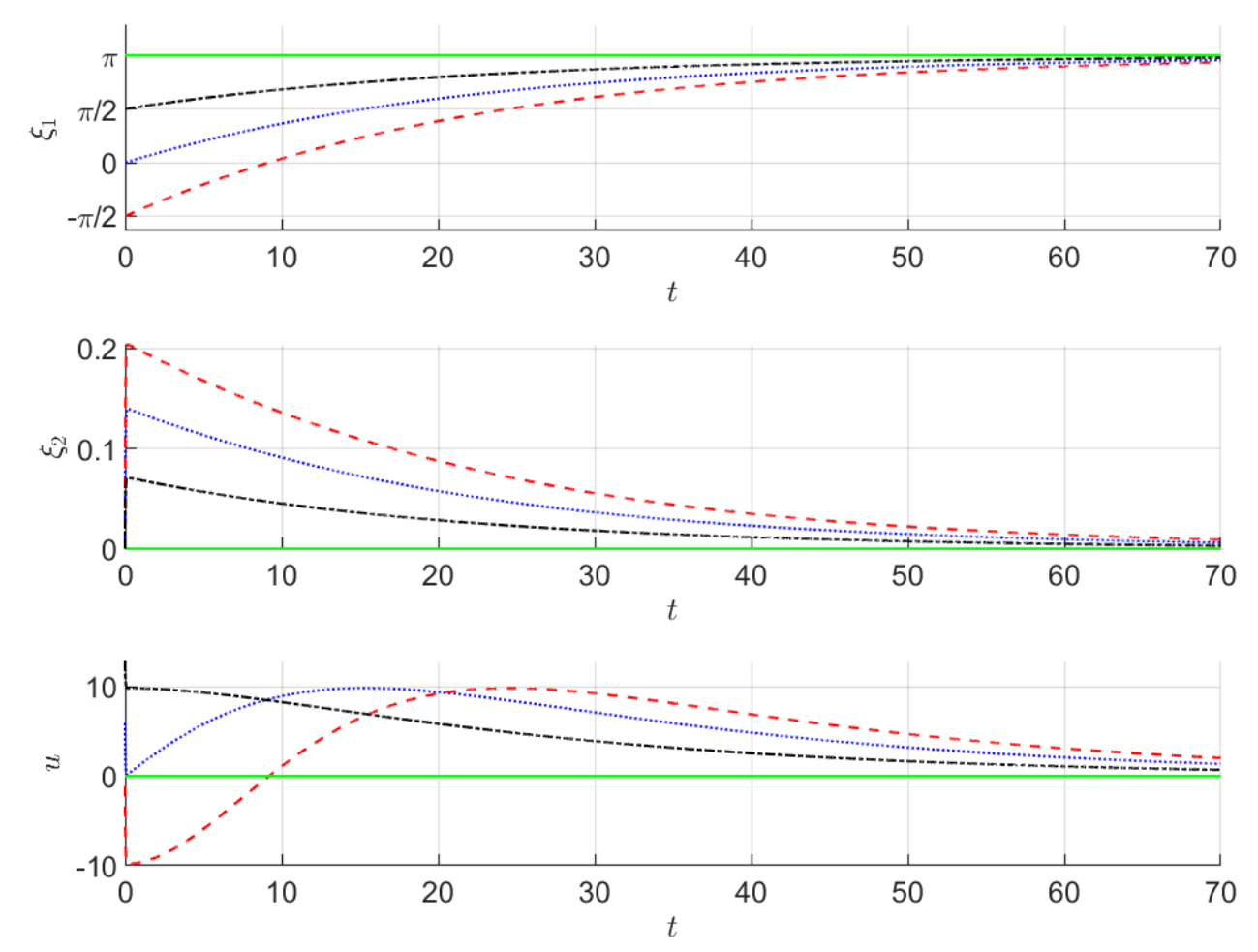}
		\vspace*{-22pt}
		\caption{Closed-loop simulations of System \eqref{eq:exmp_nonlinear_dynamics}.}
		\label{fig:exmp_pendulum}
		\vspace*{-16pt}
	\end{figure} 
	The resulting closed-loop behavior is depicted in Fig.~\ref{fig:exmp_pendulum} for different initial conditions. Note that the resulting controller successfully stabilizes the desired steady-state. For this example, \eqref{eq:cl_stability_condition_sos} is solved in less than one second.
\end{example}

\section{Conclusion}\label{sec:conclusion}
% \section{Conclusion}\label{sec:conclusion}
We proposed a data-driven control method with robust stability and performance guarantees for unknown nonlinear systems with possibly non-polynomial dynamics.
To this end, we exploited that we can rewrite the system dynamics in a structure linear in the unknown parameters and derived a stabilizing feedback controller using SOS methods. 
In our stability condition, we employed a parametrization of all systems consistent with the measured data and the introduced noise bound, and we extended the results to closed-loop performance. 
The number of decision variables of our approach is independent of the data length and thus, the procedure remains computationally tractable for medium-scale state-space dimensions. 
Since the framework also allows for non-polynomial basis functions, we were not only able to design controllers for rational systems but also nonlinear dynamics containing nonlinearities such as $\sin(x)$, $\sqrt{x}$, or $\exp(x)$. 
%
%An interesting problem for future research is to consider more general noise models. Furthermore, we conjecture that it is straightforward to also incorporate performance specifications in the stability conditions. 

% Bibliographyd
\bibliographystyle{IEEEtran}
\bibliography{local}

\end{document}